\documentclass[conference]{ieeetran}
\IEEEoverridecommandlockouts
\def\BibTeX{{\rm B\kern-.05em{\sc i\kern-.025em b}\kern-.08em
    T\kern-.1667em\lower.7ex\hbox{E}\kern-.125emX}}

\usepackage{graphicx}
\usepackage{amssymb}
\usepackage{amsmath}
\usepackage{amsthm}
\newtheorem{thm}{Theorem}

\newtheorem{lem}{Lemma}
\newtheorem{prob}{Problem}

\newtheorem{assm}{Assumption}
\newtheorem{reform}{Reformulation}

\usepackage[ruled,norelsize]{algorithm2e}

\newcommand{\R}{\mathbb{R}}
\newcommand{\N}{\mathbb{N}}
\renewcommand{\P}{\mathbb{P}}

\newcommand{\pr}[1]{\mathbb{P}\!\left(#1\right)}
\newcommand{\ex}[1]{\mathbb{E}\!\left[#1\right]}

\newcommand{\tr}[1]{\mathrm{tr}\left(#1\right)}
\newcommand{\var}[1]{\mathrm{Var}\!\left(#1\right)}

\newcommand{\std}[1]{\mathrm{Std}\!\left(#1\right)}

\newcommand{\cov}[2]{\mathrm{Cov}\!\left(#1,#2\right)}

\newcommand{\bvec}[1]{\vec{\boldsymbol{#1}}}
\newcommand{\Nt}[2]{\mathbb{N}_{[#1,#2]}}

\newcommand{\vect}[1]{\mathrm{vec}\left( #1\right)}

\makeatletter
\let\NAT@parse\undefined
\makeatother
\usepackage{hyperref}
\hypersetup{
    colorlinks=true,      
    linkcolor=black,
    citecolor=black,
    filecolor=black,
    urlcolor=black     
    }

\title{Chance Constrained Stochastic Optimal Control for Linear Systems with Time Varying Random Plant Parameters}
\author{Shawn Priore, Ali Bidram, and Meeko Oishi
    \thanks{
        This material is based upon work supported by the National Science Foundation under NSF Grant Numbers CMMI-2105631 and OIA-1757207.  Any opinions, findings, and conclusions or recommendations expressed in this material are those of the authors and do not necessarily reflect the views of the National Science Foundation.  
        \newline \indent Shawn Priore, Ali Bidram, and Meeko Oishi are with the Department of Electrical and Computer Engineering, University of New Mexico, Albuquerque, NM; e-mail: \texttt{shawnpriore@unm.edu} (corresponding author)\texttt{, bidram@unm.edu, oishi@unm.edu}.
    }
}

\begin{document}
\maketitle

\begin{abstract}
We propose an open loop control scheme for linear systems with time-varying random elements in the plant's state matrix. This paper focuses on joint chance constraints for potentially time-varying target sets. Under assumption of finite and known expectation and variance, we use the one-sided Vysochanskij–Petunin inequality to reformulate joint chance constraints into a tractable form. We demonstrate our methodology on a two-bus power system with stochastic load and wind power generation. We compare our method with situation approach. We show that the proposed method had superior solve times and favorable optimally considerations. 
\end{abstract}

\section{Introduction} \label{sec:intro}

In much of the linear controls literature, stochasticity is regarded as a factor external to the system modeling process. Additive noise is often a placeholder for systemic uncertainty that is difficult to account for. For example, wind speeds can affect the output of a wind turbine in a local grid, yet state-of-the-art models have considerable difficulty in making accurate predictions of their power output \cite{Chen2010}. New control techniques that can incorporate this stochasticity systemically have the potential to enable more efficient controllers that can be robust to natural phenomena. In this paper, we develop an optimal control derivation scheme for discrete time linear systems with time-varying stochastic elements in the state matrix subject to joint chance constraints.  

Early work in the 1960s and 1970s illuminated the need for incorporating random elements into the plant with applications in industrial manufacturing, communications systems, and econometrics \cite{Aoki1975, Aoki1967, Tou1963}. Several works considered minimization strategies for linear quadratic regulator problems. Without the addition of joint chance constraints, dynamic programming techniques can easily be employed to find optimal controllers \cite{Asher1976, BarShalom1969, Drenick1964}. These works have been extended to account for unknown distributions associated with the random parameters. Sampling techniques and feedback mechanisms have been used to overcome these hurdles \cite{Joseph1961, Oberlin1973}. Unfortunately, these regulation problems are often limited in scope and cannot readily be extended to solve for chance constraints. Random plants with more complex structure have been investigated \cite{Tse1973} but have typically been limited to Gaussian disturbances. Since the late 1970s research in this area has been sparse, appearing only occasionally in econometric literature \cite{BARSHALOM1978, TUCCI2006} where plant uncertainty has been used to model economic trends. 

A similar problem, in which the uncertainty in the plant is modeled either by bounded parameterization or a bounded column space, has been extensively studied in the robust model predictive control community \cite{Wonham1967, nilim2005, Fleming2015, Gravell2021}. By exploiting the bounded parameter and column spaces, estimation  \cite{Li2007, Liu2013}  and stability techniques  \cite{Kouramas2015, Marcos2005} allow for closed loop controller synthesis. While several of these techniques can address uncertainty in the plant, they do not address uncertainty that is random in nature \cite{Daafouz2001, Cuzzola2002}, such as unknown but deterministic parameters. Further, these methods can address uncertainty that result from bounded random variables, such as discrete distributions with finite outcomes, and uniform or beta distributions, but cannot address random variables on semi-infinite or infinite supports. 

We propose to address stochastic optimal control for systems with uncertain state matrices in a manner that is amenable to convex optimization techniques. To achieve this, we use Boole's inequality \cite{casella2002} and the one-sided Vysochanskij–Petunin inequality \cite{Mercadier2021} to transform the chance constraint into a biconvex constraint that can be solved with the alternate convex search method. Our approach offers a closed form reformulation of the chance constraints that is biconvex and can readily be solved. Further, this approach enables optimization under a wide range of distributional assumptions and any solution guarantees chance constraint satisfaction. However, our method also introduces conservatism and relies on open loop controller synthesis. In general, open loop control has known limitations with respect to stability and convergence. As is common in model predictive control literature, this approach could be combined with stabilizing controllers which introduce an extraneous input \cite{Mesbah2016}. The proposed approach accommodates that well established framework which implicitly addresses issues of stabilization and convergence. Hence, many of the known limitations typically associated with open loop control can be accommodated. In addition, there are systems, such as those with limited actuation or sensing, for which feedback is simply not possible \cite{Sudalagunta2018, Koning2019}.   {\em The main contribution of this paper is the construction of a tractable optimization problem that solves for convex joint chance constraints in the presence of random elements in the state matrix.} 

The paper is organized as follows. Section \ref{sec:prelim} provides mathematical preliminaries and formulates the optimization problem. Section \ref{sec:methods} derives the reformulation of the chance constraints with Boole's inequality and the one-sided Vysochanskij–Petunin inequality. Section \ref{sec:results} demonstrates our approach on two problems involving power generation and labor allocation, and Section \ref{sec:conclusion} provides concluding remarks.

\section{Preliminaries and Problem Formulation} \label{sec:prelim}

We denote the interval that enumerates all natural numbers from $a$ to $b$, inclusively, as $\Nt{a}{b}$. Random components will be denoted with bold case, such as $\bvec{x}$ for vectors and $\boldsymbol{A}$ for matrices, regardless of dimension. We use the notation $a_{ij}$ to denote the $(i,j)$\textsuperscript{th} element of the matrix $A$. For a random variable $\boldsymbol{x}$, we denote the expectation as $\ex{\boldsymbol{x}}$, and variance as $\var{\boldsymbol{x}}$, and standard deviation as $\std{\boldsymbol{x}}$. We use ${\coprod_{i=a}^b}$ for when $a>b$ to denote the multiplication of elements over the index $i$ as it decreases from $a$ to $b$ by $-1$.  For a matrix $A$, the operator $\mathrm{vec}(A)$ vertically concatenates the columns of $A$ into a column vector. For two matrices $A$ and $B$, we denote the Kronecker product as $A \otimes B$. For matrix entries $A_1, \ldots, A_m$, we denote a block diagonal matrix constructed with these elements as $\mathrm{blkdiag}(A_1, \ldots, A_m)$. We denote an identity matrix of size $n$ as $I_n$ and the $i$\textsuperscript{th} column of an appropriately sized identity matrix as $\vec{e}_i$.

\subsection{Problem Formulation}

We consider a discrete-time linear system given by
\begin{equation} \label{eq:dynamics}
    \bvec{x}(k+1) = \boldsymbol{A}(k) \bvec{x}(k) + B \vec{u}(k) 
\end{equation}
with state $\boldsymbol{x}(k) \in \mathcal{X} \subseteq \R^n$, input  $\vec{u}(k) \in \mathcal{U} \subseteq \R^m$, and time index $k \in \Nt{0}{N}$. We presume initial conditions, $\vec{x}(0)$, are known, and the set $\mathcal{U}$ is convex. The state matrix $\boldsymbol{A}(k)$ contains real valued random variables, $\boldsymbol{a}_{ij}$, each with probability space $(\Omega, \mathcal{B}(\Omega), \P_{\boldsymbol{a}_{ij}})$ with outcomes $\Omega$, Borel $\sigma$-algebra $\mathcal{B}(\Omega)$, and probability measure $\P_{\boldsymbol{a}_{ij}}$ \cite{casella2002}.

We write the concatenated dynamics as an affine combination of the initial condition and the concatenated control sequence,
\begin{equation} \label{eq:lin_dynamics}
    \bvec{x}(k) = \coprod_{i=k-1}^{0} \boldsymbol{A}(i) \vec{x}(0) +  \boldsymbol{\mathcal{C}}_{k-1} \mathcal{B}\vec{U}
\end{equation}
with 
\begin{subequations}
\begin{alignat}{2}
    \boldsymbol{\mathcal{C}}_k = & \left[
    {\displaystyle \coprod_{i=k}^{1}} \boldsymbol{A}(i) \  \cdots \ \ \boldsymbol{A}(k) \ I_n \ 0_{n \times (N-k-1)n} \right] && \in \R^{n \times Nn}  \\
    \mathcal{B} = &\;  \left( I_N \otimes B \right) && \in \R^{Nn \times Nm} \\
    \vec{U} =& 
    \begin{bmatrix} 
        \vec{u}(0)^\top & 
        \ldots & 
        \vec{u}(N-1)^\top 
    \end{bmatrix}^\top &&\in \mathcal{U}^{N} 
\end{alignat}
\end{subequations} 

\begin{assm}\label{assm:1}
All random components $\boldsymbol{a}_{ij}(k)$ are mutually independent within their matrix. Further, the random matrices $\boldsymbol{A}(k)$ are mutually independent for all time steps.
\end{assm}
\begin{assm}\label{assm:2}
Each random element $\boldsymbol{a}_{ij}(k)$ has a finite expectation and variance. 
\end{assm}

Both assumptions are easily met in most scenarios. We would expect the parameters to be independent in many biological and physical processes, and most distributional assumptions would provide for finite expectation and variance. Of notable exception are certain parameterizations of the $t$, the Pareto, and the inverse-Gamma distributions.  

We presume desired polytopic sets, represented by the linear inequalities $\vec{G}_{ik} \vec{x}(k) \leq h_{ik}$, that the state must stay within at each time step with a desired likelihood
\begin{equation}\label{eq:constraint_t}
        \pr{ \cap_{k=1}^N \cap_{i=1}^{c_k} \vec{G}_{ik} \bvec{x}(k) \leq h_{ik}}  \geq  1-\alpha
\end{equation}
where $c_k$ is the number of linear inequalities. We presume convex, compact, and polytopic sets $ \left\{ \bvec{x}(k) \middle| \cap_{i=1}^{c_k} \vec{G}_{ik} \bvec{x}(k) \leq h_{ik} \right\} \subseteq \mathcal{X}$, and probabilistic violation threshold $\alpha <  1/6$. 

\begin{assm}\label{assm:3}
The distribution describing each probabilistic constraint $\pr{\vec{G}_{ik} \bvec{x}(k) \leq h_{ik}}$ is marginally unimodal. 
\end{assm}

This is likely to be the most restrictive assumption as verifying unimodality can be challenging in cases where the distributional assumptions are not strongly unimodal \cite{Ibragimov1956}. For a thorough review of unimodality in distributions and strong unimodality, we recommend \cite{Bertin1997}. The primary concern for unimodality within this framework is maintaining unimodality through both additive and multiplicative operations. As the terminal time increases the more likely a non-unimodal distribution can arise from the complex and intricate interactions of the random state and the random plant parameters.

We seek to minimize a convex performance objective $J: \mathcal{X}^{N} \times \mathcal{U}^{N} \rightarrow \R$. 
\begin{subequations}\label{prob:big_prob_eq}
    \begin{align}
        \underset{\vec{U}}{\mathrm{minimize}} \quad & J\left(
        \bvec{x}(1), \dots, \bvec{x}(N), \vec{U}\right)  \\
        \mathrm{subject\ to} \quad  & \vec{U} \in \mathcal{U}^N,  \\
        & \text{Dynamics } \eqref{eq:dynamics} \text{ with }
        \vec{x}(0) \label{eq:prob1_dyn}\\
        & \text{Probabilistic constraint \eqref{eq:constraint_t}}  \label{prob:initial_eq_prob_constraints} 
    \end{align}
\end{subequations}

\begin{prob} \label{prob:1}
    Under Assumptions \ref{assm:1}-\ref{assm:3}, solve the stochastic optimization problem \eqref{prob:big_prob_eq} with open loop control $\vec{U}\in  \mathcal{U}^N$, and probabilistic violation threshold $\alpha$.
\end{prob}
The main challenge in solving Problem \ref{prob:1} is assuring \eqref{prob:initial_eq_prob_constraints}. The interaction of multiplying the random state matrices makes enforcing the constraints challenging. Even if closed form expressions exist for a single time step there is no guarantee an expression will exist at the next time step.  

\section{Methods} \label{sec:methods}

Our approach to solve Problem \ref{prob:1} involves reformulating the joint chance constraint \eqref{eq:constraint_t} into a  series of constraints that are affine in the constraint’s expectation and standard deviation, $\ex{\vec{G}_{ik} \bvec{x}(k)}$ and $\std{\vec{G}_{ik} \bvec{x}(k)}$, respectively. This form is amenable to the use of the one-sided Vysochanskij–Petunin inequality which guarantees the synthesized controller satisfies the probabilistic constraint. The reformulation results in an easy to solve biconvex optimization problem. 

\subsection{The Vysochanskij–Petunin Inequality} \label{ssec:vp_i}

The one-sided Vysochanskij–Petunin inequality \cite{Mercadier2021} is the foundational theorem underpinning the approach we take in this work.

\begin{thm}[The one-sided Vysochanskij–Petunin inequality \cite{Mercadier2021}]
Let $\boldsymbol{x}$ be a real valued unimodal random variable with finite expectation $\ex{\boldsymbol{x}}$ and finite, non-zero standard deviation $\std{\boldsymbol{x}}$. Then, for $\lambda > \sqrt{5/3}$, 
\begin{equation} \label{eq:vp}
    \pr{  \boldsymbol{x} - \ex{\boldsymbol{x}}  \geq  \lambda \std{\boldsymbol{x}}} \leq \frac{4}{9(\lambda^2+1)}
\end{equation}
\end{thm}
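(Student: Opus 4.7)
The plan is to follow the classical strategy that Vysochanskij and Petunin used for the two-sided bound, adapted to the one-sided case via Khintchine's representation theorem. By an affine change of variables, I would first assume without loss of generality that $\ex{\boldsymbol{x}}=0$ and $\std{\boldsymbol{x}}=1$, so the claim reduces to showing $\pr{\boldsymbol{x}\ge\lambda}\le 4/(9(\lambda^2+1))$ for every unimodal distribution with those moments; unimodality is preserved under affine transformation, so the normalized variable remains in the class under consideration.

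Next, I would invoke Khintchine's theorem: any real unimodal random variable with mode $m$ has the law of $m+UV$, where $U\sim\mathrm{Unif}[0,1]$ and $V$ is an independent random variable. Under this representation, the tail probability $\pr{m+UV\ge\lambda}$, the mean $m+\tfrac{1}{2}\ex{V}$, and the second moment $\tfrac{1}{3}\ex{V^2}$ all become linear functionals of the law of $V$ for each fixed $m$. Maximizing the tail probability subject to the two linear moment constraints is an infinite-dimensional linear program, and by the general theory of generalized Chebyshev inequalities its extreme points are supported on at most three atoms. A case analysis on the atoms of $V$ and on the sign/location of $m$ then reduces the entire problem to a finite-dimensional optimization in a handful of real parameters.

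Finally, I would carry out the explicit maximization. The threshold $\lambda>\sqrt{5/3}$ is expected to emerge as the value above which the binding extremal configuration is a specific two-atom $V$, making $UV$ a mixture of a uniform segment and a point mass; below that value a different extremal (or the trivial bound $1$) becomes tight. Evaluating the tail probability, mean, and variance in closed form for this candidate extremal, eliminating parameters through the two moment equalities, and simplifying should produce exactly $4/(9(\lambda^2+1))$.

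The main obstacle is the variational step: rigorously reducing the infinite-dimensional problem to the correct finite family and then identifying the winning configuration across the entire parameter range $\lambda>\sqrt{5/3}$. Cantelli's one-sided inequality $\pr{\boldsymbol{x}\ge\lambda}\le 1/(\lambda^2+1)$ provides a useful sanity check against which to measure progress, and it isolates the source of the extra $4/9$ factor, which must come exclusively from the unimodality constraint restricting the worst-case measure to laws of the form $m+UV$.
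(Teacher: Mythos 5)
You should first note that the paper itself contains no proof of this theorem: it is imported verbatim from the cited reference \cite{Mercadier2021} and used as a black box, so there is no in-paper argument to compare your proposal against. The relevant question is therefore only whether your outline would constitute a correct standalone proof.

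As a roadmap it is sound and matches the standard route for generalized Chebyshev-type inequalities (and, in essentials, the route taken in the cited reference): normalize to $\ex{\boldsymbol{x}}=0$, $\std{\boldsymbol{x}}=1$; invoke Khintchine's representation of unimodal laws as $m+UV$ with $U\sim\mathrm{Unif}[0,1]$ independent of $V$; observe that the tail probability and the two moment constraints are linear in the law of $V$; and reduce the resulting infinite-dimensional linear program to extremal measures with few atoms. But as written it is a plan, not a proof: every step that actually produces the theorem is deferred. You do not identify the extremal configuration, you do not explain why the constant $4/9$ (rather than Cantelli's $1$) emerges, and you do not derive the threshold $\sqrt{5/3}$ — you only predict that it ``is expected to emerge.'' The reduction to at most three atoms also needs care, since the objective $\pr{m+UV\ge\lambda}$ is linear in the law of $V$ only after fixing $m$, so the optimization over the mode $m$ sits outside the linear-programming structure and must be handled separately; your sketch acknowledges a case analysis on $m$ but does not carry it out. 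Until the extremal two-atom candidate is written down explicitly, its tail probability computed, and the resulting one-parameter maximization shown to equal $4/(9(\lambda^2+1))$ exactly on the range $\lambda>\sqrt{5/3}$, the proposal establishes nothing beyond what Cantelli's inequality already gives. Given that the paper treats this as a cited external result, the pragmatic resolution is to keep the citation rather than attempt to reprove it; if you do want a self-contained proof, the missing finite-dimensional optimization is the entire mathematical content and must be supplied.
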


The one-sided Vysochanskij–Petunin inequality is a refinement of Cantelli's inequality for unimodal distributions. Based on Gauss's inequality, it provides a bound for one-sided tail probabilities of a unimodal random variable being sufficiently far away from the expectation. Specifically, the bound encompasses values at least $\lambda$ standard deviations away from the mean. We first make use of \eqref{eq:vp} to bound the chance constraint probabilities based on an affine summation of the expectation and standard deviation.

\subsection{Constraint Reformulation}

For brevity, we drop the index $i$ by assuming $c_k=1$. We take the complement and employ Boole's inequality \cite{casella2002} to convert the joint chance constraint into a sum of individual chance constraints,
\begin{equation}
    \pr{ \sum_{k=1}^{N} \vec{G}_{k} \bvec{x}(k) \geq h_{k} }
    \leq  \sum_{k=1}^{N} \pr{\vec{G}_{k} \bvec{x}(k) \geq h_{k}}
\end{equation}
Using the approach in \cite{ono2008iterative}, we introduce risk allocation variables $\omega_k$ for each of the individual chance constraints and bound the sum of risk allocation variables,
\begin{subequations}\label{eq:quantile_reform_new_var}
\begin{align}
     \pr{\vec{G}_{k} \bvec{x}(k) \geq h_{k} } &\leq \omega_k  \quad \forall k \in \N_{[1,N]} \label{eq:quantile_reform_new_var_1_1}\\
     \sum_{k=1}^{N} \omega_k &\leq \alpha 
\end{align}
\end{subequations}
where $\omega_k$ is a non-negative real number. 

Here, we need to find an appropriate value for $\omega_k$ such that we can solve this problem. To that end, we add an additional constraint
\begin{equation} \label{eq:add_target}
    \ex{\vec{G}_{k}\bvec{x}(k)} + \lambda_k \std{\vec{G}_{k} \bvec{x}(k)} \leq  h_{k}
\end{equation}
with optimization parameter $\lambda_k > 0$. Here, \eqref{eq:add_target} implies 
\begin{equation}\label{eq:first_bound}
\begin{split}
    &\pr{\vec{G}_{k} \bvec{x}(k) \geq h_{ik} } \\
    & \; \leq \pr{\vec{G}_{k} \bvec{x}(k) \geq \ex{\vec{G}_{k} \bvec{x}(k)} \!+\! \lambda_k \std{\vec{G}_{k} \bvec{x}(k)}} 
\end{split}
\end{equation}
Under Assumption \ref{assm:3}, the one-sided Vysochanskij–Petunin inequality allows us to bound,
\begin{equation}\label{eq:cheby}
    \pr{\vec{G}_{k} \bvec{x}(k) \geq \ex{\vec{G}_{k} \bvec{x}(k)} \!+\! \lambda_k \std{\vec{G}_{k} \bvec{x}(k)} }\leq \frac{4}{9(\lambda_k^2\!+\!1)}
\end{equation}
so long as $\lambda_k \geq \sqrt{5/3}$. Here, $\alpha <1/6$ implies $\lambda_k$ cannot take values smaller than $\sqrt{5/3}$.

By substituting \eqref{eq:first_bound}-\eqref{eq:cheby} into \eqref{eq:quantile_reform_new_var_1_1}, we can establish the relationship between $\lambda_k$ and $\omega_k$ as
$$\omega_k = \frac{4}{9(\lambda_k^2\!+\!1)}$$
Hence, \eqref{eq:quantile_reform_new_var}-\eqref{eq:cheby} simplifies to 
\begin{subequations}\label{eq:quantile_reform_new_var_3}
\begin{align}
     \ex{\vec{G}_{k} \bvec{x}(k)} + \lambda_k \std{\vec{G}_{k} \bvec{x}(k)} & \leq  h_{k} \quad \forall k \in \Nt{1}{N} \label{eq:target_reform}\\
     \sum_{k=1}^{N} \frac{4}{9(\lambda_k^{2}+1)} &\leq \alpha \label{eq:target_lambda}
\end{align}
\end{subequations}
for optimization parameter $\lambda_k > \sqrt{5/3}$.

\begin{lem} \label{lem:1}
For the controller $\vec{U}$, if there exists risk allocation variables $\lambda_{k}$ satisfying \eqref{eq:quantile_reform_new_var_3} for constraints in the form of \eqref{eq:constraint_t}, then $\vec{U}$ satisfies \eqref{prob:initial_eq_prob_constraints}.
\end{lem}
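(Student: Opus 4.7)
The plan is to assemble the chain of inequalities developed preceding Lemma~\ref{lem:1} in the forward direction, verifying that \eqref{eq:target_reform} and \eqref{eq:target_lambda} together constitute a valid sufficient condition for \eqref{prob:initial_eq_prob_constraints}. Since each step in that derivation was a one-way upper bound, the proof reduces to walking those bounds in order and checking that no hypothesis has been dropped.

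First, I would verify that the hypothesis of Theorem 1 applies for every $k$. Feasibility of \eqref{eq:target_lambda} together with $\alpha < 1/6$ forces each $\lambda_k > \sqrt{5/3}$: if any $\lambda_k$ were at most $\sqrt{5/3}$, then the single term $\frac{4}{9(\lambda_k^2+1)}$ would be at least $1/6$, exceeding $\alpha$ and violating \eqref{eq:target_lambda}. Under Assumption~\ref{assm:3}, Theorem 1 then bounds the probability that $\vec{G}_k \bvec{x}(k)$ lies at least $\lambda_k$ standard deviations above its mean by $\frac{4}{9(\lambda_k^2+1)}$.

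Second, I would combine this tail bound with \eqref{eq:target_reform}. Because \eqref{eq:target_reform} places $h_k$ at or beyond $\ex{\vec{G}_k \bvec{x}(k)} + \lambda_k \std{\vec{G}_k \bvec{x}(k)}$, the event $\{\vec{G}_k \bvec{x}(k) > h_k\}$ is contained in the Vysochanskij--Petunin tail event, so $\pr{\vec{G}_k \bvec{x}(k) > h_k} \leq \frac{4}{9(\lambda_k^2+1)}$. Taking the complement of the joint chance event, applying Boole's inequality, and invoking \eqref{eq:target_lambda} yields
\[
    \pr{\cup_{k=1}^N \{\vec{G}_k \bvec{x}(k) > h_k\}} \leq \sum_{k=1}^N \frac{4}{9(\lambda_k^2+1)} \leq \alpha,
\]
and passing back to complements recovers \eqref{prob:initial_eq_prob_constraints}. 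For general $c_k \geq 1$, the same argument goes through with Boole's inequality taken over the combined index $(i,k)$ and a separate risk allocation variable $\lambda_{ik}$ introduced for each constraint row.

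The only genuine obstacle is the verification that $\lambda_k > \sqrt{5/3}$ is automatically enforced by feasibility, since this is the sole place where the assumption $\alpha < 1/6$ is invoked. All remaining steps are direct applications of monotonicity of probability, Boole's inequality, and Theorem 1, all already assembled in Section~\ref{sec:methods}.
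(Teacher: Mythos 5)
Your proposal is correct and follows essentially the same route as the paper's proof: \eqref{eq:target_reform} gives the event containment \eqref{eq:first_bound}, the one-sided Vysochanskij--Petunin inequality supplies the tail bound \eqref{eq:cheby}, and Boole's inequality with De Morgan's law plus \eqref{eq:target_lambda} closes the argument. Your explicit check that feasibility of \eqref{eq:target_lambda} with $\alpha < 1/6$ forces $\lambda_k > \sqrt{5/3}$ is a worthwhile detail the paper only states in the surrounding text rather than inside the proof itself.
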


\begin{proof}
Satisfaction of \eqref{eq:target_reform} implies \eqref{eq:first_bound} holds. The one-sided Vysochanskij–Petunin inequality upper bounds \eqref{eq:first_bound} via \eqref{eq:cheby}. Boole's inequality and De Morgan's law \cite{casella2002} guarantee that if \eqref{eq:target_lambda} holds then \eqref{prob:initial_eq_prob_constraints} is satisfied.
\end{proof}

We formally define the reformulated optimization problem.
\begin{subequations}\label{prob:big_prob_eq2}
    \begin{align}
        \underset{\vec{u}, \lambda_1, \dots, \lambda_{N}}{\mathrm{minimize}} \quad & J\left(
        \bvec{x}(1), \dots, \boldsymbol{x}(N), \vec{U} \right)  \\
        \mathrm{subject\ to\ } \quad  & \vec{U} \in \mathcal{U}^N,  \\
        & \text{Expectation and variance derived} \nonumber \\
        & \text{from dynamics } \eqref{eq:dynamics} \text{ with } \vec{x}(0) \label{eq:prob2_dyn}\\
        & \text{Constraint \eqref{eq:quantile_reform_new_var_3}} \label{eq:prob2_constraint} 
    \end{align}
\end{subequations}
\begin{reform} \label{prob:2}
    Under Assumptions \ref{assm:1}-\ref{assm:3}, solve the stochastic optimization problem \eqref{prob:big_prob_eq2} with open loop control $\vec{U} \in  \mathcal{U}^N$, optimization parameters $\lambda_k$, and probabilistic violation threshold $\alpha$.
\end{reform}

\begin{lem} \label{lem:2}
Any solution to Reformulation \ref{prob:2} is a conservative solution to Problem \ref{prob:1}.
\end{lem}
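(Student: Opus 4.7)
The plan is to treat this as a two-part statement: first, any feasible point of Reformulation~\ref{prob:2} is feasible for Problem~\ref{prob:1}; second, the containment of feasible sets can be strict, which is exactly what ``conservative'' is meant to capture. The first part is essentially a corollary of Lemma~\ref{lem:1}, while the second part requires identifying where slack is introduced by the reformulation.

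First I would take an arbitrary solution $(\vec{U}^\star, \lambda_1^\star, \dots, \lambda_N^\star)$ of Reformulation~\ref{prob:2}. By construction, $\vec{U}^\star \in \mathcal{U}^N$ and the dynamics \eqref{eq:dynamics} with initial condition $\vec{x}(0)$ determine the expectation and standard deviation appearing in \eqref{eq:quantile_reform_new_var_3}. Since $(\lambda_1^\star, \dots, \lambda_N^\star)$ satisfies \eqref{eq:quantile_reform_new_var_3} with each $\lambda_k^\star > \sqrt{5/3}$ (forced by $\alpha < 1/6$ together with \eqref{eq:target_lambda}), Lemma~\ref{lem:1} applies and guarantees that $\vec{U}^\star$ satisfies the joint chance constraint \eqref{prob:initial_eq_prob_constraints}. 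Combined with $\vec{U}^\star \in \mathcal{U}^N$ and the same underlying dynamics, $\vec{U}^\star$ is therefore feasible for Problem~\ref{prob:1}, and since the objective $J$ is identical in both formulations, the value attained is a valid (upper bound on the) objective value for Problem~\ref{prob:1}.

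To formalize ``conservative,'' I would argue that the projection of the feasible set of Reformulation~\ref{prob:2} onto the $\vec{U}$ coordinate is contained in the feasible set of Problem~\ref{prob:1}, and that this containment may be strict. The strictness comes from two sources of slack in the chain of implications used to derive \eqref{eq:quantile_reform_new_var_3}: Boole's inequality, which is tight only when the constraint violation events are disjoint, and the one-sided Vysochanskij--Petunin bound \eqref{eq:cheby}, which is tight only for a specific extremal unimodal distribution. Consequently there can exist controllers $\vec{U}$ that satisfy \eqref{prob:initial_eq_prob_constraints} but for which no vector $(\lambda_1, \dots, \lambda_N)$ simultaneously satisfies \eqref{eq:target_reform} and \eqref{eq:target_lambda}. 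Denoting the optimal values by $J^\star_1$ and $J^\star_2$, the feasible-set containment yields $J^\star_1 \leq J^\star_2$, so any solution of Reformulation~\ref{prob:2} is feasible for Problem~\ref{prob:1} and, in general, suboptimal; this is precisely the meaning of a conservative solution.

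The only nontrivial step is the first one, and even it reduces to a direct invocation of Lemma~\ref{lem:1}; the remainder is bookkeeping. The main subtlety to watch for is ensuring that the expectation-and-variance constraints of Reformulation~\ref{prob:2} encoded by \eqref{eq:prob2_dyn} are indeed the ones induced by \eqref{eq:dynamics} under $\vec{x}(0)$ (so that the $\ex{\cdot}$ and $\std{\cdot}$ appearing in \eqref{eq:target_reform} match those implicit in \eqref{eq:first_bound}), which is true by construction. No further calculation is needed.
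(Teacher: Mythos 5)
Your proof is correct and follows essentially the same route as the paper's: invoke Lemma~\ref{lem:1} for feasibility of any point satisfying \eqref{eq:quantile_reform_new_var_3}, note that \eqref{eq:prob2_dyn} encodes the same dynamics as \eqref{eq:prob1_dyn}, and attribute the conservatism to slack in Boole's inequality and the one-sided Vysochanskij--Petunin bound. Your version is in fact slightly more careful on two points --- you make the feasible-set containment and the resulting inequality between optimal values explicit, and you correctly state that Boole's inequality is tight for \emph{disjoint} events (the paper says ``independent,'' which is not the tightness condition).
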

\begin{proof}
By Lemma \ref{lem:1}, \eqref{eq:prob2_dyn}-\eqref{eq:prob2_constraint} satisfy \eqref{prob:initial_eq_prob_constraints}. Here, \eqref{eq:prob2_dyn} replaces \eqref{eq:prob1_dyn} as we only need the expectation and variance derived from the dynamics. All other elements remain unchanged. Conservatism is introduced from Boole's inequality as equality is only achieved when constraints are independent. Similarly, the one-sided Vysochanskij-Petunin inequality only achieves equality only in extremely rare cases. In most scenarios, the use of the Vysochanskij-Petunin inequality will introduce conservatism.
\end{proof}

Here, Reformulation \ref{prob:2} is a conservative but tractable reformulation of Problem \ref{prob:1}. While we cannot guarantee a solution exists to Reformulation \ref{prob:2}, we can guarantee any solution to Reformulation \ref{prob:2} is a solution to Problem \ref{prob:1}, if one exists.

\subsection{Solving Reformulation \ref{prob:2}}

We note that while \eqref{eq:target_reform} will elicit a closed form due to Assumptions \ref{assm:1} and \ref{assm:2}, and linear dynamics, deriving this expression is tedious. This is particularly true for longer time horizons. For the random variable $\vec{G}_{k} \bvec{x}(k)$, the affine form of \eqref{eq:lin_dynamics} allows us to easily compute the expectation via the linearity of the expectation operator, 
\begin{equation} \label{eq:exp_at_k}
    \ex{\vec{G}_{k} \bvec{x}(k)} = \vec{G}_{k} \ex{\coprod_{i=k}^{0} \boldsymbol{A}(i)} \vec{x}(0) +  \ex{\boldsymbol{\mathcal{C}}_k}\mathcal{B}\vec{U}
\end{equation}
Here, $\ex{\coprod_{i=k}^{0} \boldsymbol{A}(i)}$ and $\ex{\boldsymbol{\mathcal{C}}}$ can easily be computed by observing 
\begin{equation}
    \ex{\coprod_{k=a}^b \boldsymbol{A}(k)} = \coprod_{k=a}^b \ex{\boldsymbol{A}(k)}
\end{equation}
for any $a \in \Nt{0}{N-1}$ and $b \in \Nt{0}{N-1}$ by Assumption \ref{assm:1}.

To derive the standard deviation of $\vec{G}_{k} \bvec{x}(k)$, we start by noting five formulas. First, by construction of $\boldsymbol{\mathcal{C}}_k$, there exists some $a \in \Nt{1}{N}$ such that for $j^{\ast} = j - an$ we can write 
\begin{equation} \label{eq:ck_ej}
    \boldsymbol{\mathcal{C}}_k \vec{e}_j =
    \begin{cases}
        \coprod_{i=k}^{a}  \boldsymbol{A}(i) \vec{e}_{j^{\ast}} & \text{if } a \leq k\\
        \vec{e}_{j^{\ast}} & \text{if } a = k+1 \\
        \vec{0} & \text{if } a > k+1
    \end{cases}
\end{equation}
Second, for a random matrix $\boldsymbol{Z}$ following Assumption \ref{assm:1} and a non-random matrix $S$,
\begin{equation} \label{eq:ex_quad}
\begin{split}
     \ex{\boldsymbol{Z}^{\top} \!S \boldsymbol{Z}}
    &  =\ex{\boldsymbol{Z}^{\top}} S \ex{\boldsymbol{Z}} \\
    & \; + \mathrm{diag}\left(\tr{S \var{\boldsymbol{Z} \vec{e}_1}}, \dots, \tr{S \var{\boldsymbol{Z} \vec{e}_n}} \right)    
\end{split}
\end{equation}
The third and fourth formulas are the derived recursive formulas \eqref{eq:var_ay}-\eqref{eq:var_ak_quad}. Here, \eqref{eq:var_ay} is a result of the Law of Total Variance \cite{casella2002}, \eqref{eq:law_exp} is a result of the Law of Total Expectation \cite{casella2002}, \eqref{eq:vec_prop} results from the vectorization function, and \eqref{eq:kron_prop} results from the mixed-product property of the Kronecker product. Finally, for $a < b \in \Nt{0}{k}$

\begin{figure*}
For a known vector $\vec{y}$ and $a \in \Nt{0}{k}$
\begin{subequations} \label{eq:var_ay}
    \begin{align}
    & \var{\coprod_{i=k}^{a} \boldsymbol{A}(i) \vec{y}}\\
    & \; =\ex{\var{\coprod_{i=k}^{a} \boldsymbol{A}(i) \vec{y} \middle| \coprod_{i=k-1}^{a} \boldsymbol{A}(i) \vec{y} }}  + \var{\ex{\coprod_{i=k}^{a} \boldsymbol{A}(i) \vec{y} \middle| \coprod_{i=k-1}^{a} \boldsymbol{A}(i) \vec{y}} } \\
    & \; =\ex{\var{\left( \vec{y}^{\top} \coprod_{i=a}^{k-1} \boldsymbol{A}(i)^{\top} \otimes I_n \right)\vect{\boldsymbol{A}(k)}  \middle| \coprod_{i=k-1}^{a} \boldsymbol{A}(i) \vec{y}}}  + \var{\ex{\boldsymbol{A}(k)}\coprod_{i=k-1}^{a} \boldsymbol{A}(i) \vec{y}} \\
    & \; = \underbrace{\ex{\left( \vec{y}^{\top} \prod_{i=a}^{k-1} \boldsymbol{A}(i)^{\top} \otimes I_n \right) \var{\vect{\boldsymbol{A}(k)} }  \left(  \coprod_{i=k-1}^{a} \boldsymbol{A}(i) \vec{y} \otimes I_n \right) }}_{\text{See }\eqref{eq:var_ak_quad}} + \ex{\boldsymbol{A}(k)}\var{\coprod_{i=k-1}^{a} \boldsymbol{A}(i) \vec{y}} \ex{\boldsymbol{A}(k)^{\top}}
\end{align}
\end{subequations}
\hrulefill
\begin{subequations} \label{eq:var_ak_quad}
\begin{align}
    & \ex{\left( \vec{y}^{\top} \prod_{i=a}^{k-1} \boldsymbol{A}(i)^{\top} \otimes I_n \right) \var{\vect{\boldsymbol{A}(k)} }  \left(  \coprod_{i=k-1}^{a} \boldsymbol{A}(i) \vec{y} \otimes I_n \right) } \\
    & \; = \ex{\ex{\left( \vec{y}^{\top} \prod_{i=a}^{k-1} \boldsymbol{A}(i)^{\top} \otimes I_n \right) \var{\vect{\boldsymbol{A}(k)} }  \left(  \coprod_{i=k-1}^{a} \boldsymbol{A}(i) \vec{y} \otimes I_n \right) \middle| \coprod_{i=k-2}^{a} \boldsymbol{A}(i) \vec{y}} } \label{eq:law_exp}\\
    & \; = \ex{\left( \vec{y}^{\top} \prod_{i=a}^{k-2} \boldsymbol{A}(i)^{\top} \otimes I_n \right)\ex{\left( \boldsymbol{A}(k\!-\!1)^{\top} \otimes I_n \right) \var{\vect{\boldsymbol{A}(k)} }  \left( \boldsymbol{A}(k\!-\!1) \otimes I_n \right) } \left(  \coprod_{i=k-2}^{a} \boldsymbol{A}(i) \vec{y} \otimes I_n \right) } \\
    & \; = \ex{\left( \vec{y}^{\top} \prod_{i=a}^{k-2} \boldsymbol{A}(i)^{\top} \!\otimes\! I_n \right)\ex{\left( \boldsymbol{A}(k\!-\!1)^{\top} \!\otimes\! I_n \right) \left( \sum_{j=1}^n \left(\vec{e}_j \vec{e}_j^{\top} \otimes \var{\boldsymbol{A}(k) \vec{e}_j} \right)\right)  \left( \boldsymbol{A}(k\!-\!1) \!\otimes\! I_n \right) } \left(  \coprod_{i=k-2}^{a} \boldsymbol{A}(i) \vec{y} \otimes I_n \right) } \label{eq:vec_prop}\\
    & \; = \ex{\left( \vec{y}^{\top} \prod_{i=a}^{k-2} \boldsymbol{A}(i)^{\top} \otimes I_n \right)\ex{  \sum_{j=1}^n \boldsymbol{A}(k\!-\!1)^{\top}\vec{e}_j \vec{e}_j^{\top} \boldsymbol{A}(k\!-\!1) \otimes \var{\boldsymbol{A}(k) \vec{e}_j}   } \left(  \coprod_{i=k-2}^{a} \boldsymbol{A}(i) \vec{y} \otimes I_n \right) } \label{eq:kron_prop}\\
    & \; = \ex{\left( \vec{y}^{\top} \prod_{i=a}^{k-2} \boldsymbol{A}(i)^{\top} \otimes I_n \right) \left(\sum_{j=1}^n \underbrace{\ex{   \boldsymbol{A}(k\!-\!1)^{\top}\vec{e}_j \vec{e}_j^{\top} \boldsymbol{A}(k\!-\!1)}}_{S_{k-1}} \otimes \var{\boldsymbol{A}(k) \vec{e}_j} \right) \left(  \coprod_{i=k-2}^{a} \boldsymbol{A}(i) \vec{y} \otimes I_n \right) }\\
    & \; = \ex{\left( \vec{y}^{\top} \prod_{i=a}^{k-3} \boldsymbol{A}(i)^{\top} \otimes I_n \right) \left(\sum_{j=1}^n \underbrace{\ex{ \boldsymbol{A}(k\!-\!2)^{\top} S_{k-1} \boldsymbol{A}(k\!-\!2)}}_{S_{k-2}} \otimes \var{\boldsymbol{A}(k) \vec{e}_j} \right) \left(  \coprod_{i=k-2}^{a} \boldsymbol{A}(i) \vec{y} \otimes I_n \right) } \\
    & \qquad \vdots \\
    & \; = \left( \vec{y}^{\top} \otimes I_n \right) \left(\sum_{j=1}^n \ex{ \boldsymbol{A}(a)^{\top} S_{a+1} \boldsymbol{A}(a)} \otimes \var{\boldsymbol{A}(k) \vec{e}_j} \right) \left( \vec{y} \otimes I_n \right)      
\end{align}
\end{subequations}
\hrulefill
\setcounter{equation}{24}
\begin{equation} \label{eq:std_gx}
\begin{split}
    \std{\vec{G}_{k}\boldsymbol{x}(k)}^2 = & \; 
    \vec{G}_{k} \var{\coprod_{i=k}^{0} \boldsymbol{A}(i) \vec{x}(0)} \vec{G}_{k}^{\top} + \vec{G}_{k} (\vec{U}^{\top}\mathcal{B}^{\top} \otimes I_{n})\var{ \sum_{j=1}^{Nn} \left( \vec{e}_j \otimes \boldsymbol{\mathcal{C}}_k \vec{e}_j \right) } (\mathcal{B}\vec{U} \otimes I_{n})  \vec{G}_{k}^{\top} \\
    & \; \; + 2 \vec{G}_{k} (\vec{x}(0)^{\top} \otimes I_{n}) \sum_{j=1}^n \sum_{m=1}^{Nn}  \left( \vec{e}_j \vec{e}_m^{\top} \otimes \cov{ \coprod_{i=k}^{0} \boldsymbol{A}(i) \vec{e}_j }{\boldsymbol{\mathcal{C}}_k \vec{e}_m }\right) (\mathcal{B}\vec{U} \otimes I_{n})  \vec{G}_{k}^{\top}
\end{split}
\end{equation}
\hrulefill
\setcounter{equation}{19}
\end{figure*}

\begin{equation} \label{eq:cov_ck_eij}
\begin{split}
    & \cov{ \coprod_{i=k}^{a} \boldsymbol{A}(i)  \vec{e}_j}{\coprod_{i=k}^{b} \boldsymbol{A}(i)  \vec{e}_m } \\
    & \ = \ex{\prod_{i=k}^b \boldsymbol{A}(i) \left(\prod_{i=b-1}^a \ex{\boldsymbol{A}(i)} \right) \vec{e}_{i^{\ast}} \vec{e}_{j^{\ast}}^{\top} \prod_{i=b}^k \boldsymbol{A}(i)^{\top}} \\
    & \ \ - \prod_{i=k}^a \ex{\boldsymbol{A}(i)} \vec{e}_{i^{\ast}} \vec{e}_{j^{\ast}}^{\top} \prod_{i=b}^k \ex{ \boldsymbol{A}(i)^{\top}}  
\end{split}
\end{equation}
by \eqref{eq:ck_ej} and \eqref{eq:ex_quad}. This formula can easily be modified for when $a \geq b$.

Now, we expand the variance term $\var{\vec{G}_{k}\boldsymbol{x}(k)}$ as
\begin{subequations} \label{eq:var_at_k}
\begin{align}
    & \var{\vec{G}_{k}\boldsymbol{x}(k)}  \\
    & \; = \vec{G}_{k} \var{\coprod_{i=k}^{0} \boldsymbol{A}(i) \vec{x}(0)} \vec{G}_{k}^{\top} + \vec{G}_{k} \var{ \boldsymbol{\mathcal{C}}_k\mathcal{B}\vec{U}} \vec{G}_{k}^{\top}  \nonumber \\
    & \qquad + 2 \vec{G}_{k} \cov{\coprod_{i=k}^{0} \boldsymbol{A}(i) \vec{x}(0)} {\boldsymbol{\mathcal{C}}_k\mathcal{B}\vec{U}} \vec{G}_{k}^{\top} 
\end{align}
\end{subequations}
Using the formulas \eqref{eq:ex_quad}-\eqref{eq:var_ak_quad}, $\var{\coprod_{i=k}^{0} \boldsymbol{A}(i) \vec{x}(0)}$ can be found by substituting $a=0$ and $\vec{y} = \vec{x}(0)$. Next, we expand the expression for $\var{ \boldsymbol{\mathcal{C}}_k\mathcal{B}\vec{U}}$ as 
\begin{subequations}
\begin{align}
    & \var{ \boldsymbol{\mathcal{C}}_k \mathcal{B}\vec{U}} \\
    & \; = (\vec{U}^{\top}\mathcal{B}^{\top} \otimes I_{n})\var{ \vect{\boldsymbol{\mathcal{C}}_k} } (\mathcal{B}\vec{U} \otimes I_{n})  \\
    & \; = (\vec{U}^{\top}\mathcal{B}^{\top} \otimes I_{n})\var{ \sum_{j=1}^{Nn} \left( \vec{e}_j \otimes \boldsymbol{\mathcal{C}}_k \vec{e}_j \right) } (\mathcal{B}\vec{U} \otimes I_{n})  \label{eq:temp1}
\end{align}
\end{subequations}
and we can expand the variance term in \eqref{eq:temp1} as
\begin{subequations}
\begin{align}
    & \var{ \sum_{j=1}^{Nn} \left( \vec{e}_j \otimes \boldsymbol{\mathcal{C}}_k\vec{e}_j \right) } \\
    & \ =  \sum_{j=1}^{Nn} \left(\vec{e}_j\vec{e}_j^{\top} \otimes \var{ \boldsymbol{\mathcal{C}}_k\vec{e}_j} \right) \label{eq:var_cbu}\\
    & \ \ + \sum_{j=1}^{Nn}\sum_{\substack{m=1\\ j\neq m}}^{Nn} \left(\vec{e}_j \vec{e}_m^{\top}  \otimes \cov{  \boldsymbol{\mathcal{C}}_k \vec{e}_j}{\boldsymbol{\mathcal{C}}_k \vec{e}_m } \right) \label{eq:cov_cbu}
\end{align}
\end{subequations}
Hence, we can find the value of \eqref{eq:var_cbu} via \eqref{eq:ex_quad}-\eqref{eq:var_ak_quad}. Similarly, we can find the value of \eqref{eq:cov_cbu} via \eqref{eq:ck_ej} and \eqref{eq:cov_ck_eij}. Finally, we expand the covariance term,
\begin{subequations}
\begin{align}
    & \cov{\coprod_{i=k}^{0} \boldsymbol{A}(i) \vec{x}(0)} {\boldsymbol{\mathcal{C}}_k \mathcal{B}\vec{U}} \\
    & \; =  (\vec{x}(0)^{\top} \!\otimes\! I_{n}) \cov{\!\vect{\coprod_{i=k}^{0} \boldsymbol{A}(i)}\!} {\vect{\boldsymbol{\mathcal{C}}_k}\!}\! (\mathcal{B}\vec{U} \!\otimes\! I_{n}) \\
    & \; =  (\vec{x}(0)^{\top} \otimes I_{n}) \\
    & \quad \times \sum_{j=1}^n \sum_{m=1}^{Nn}  \left( \vec{e}_j \vec{e}_m^{\top} \otimes \cov{ \coprod_{i=k}^{0} \boldsymbol{A}(i) \vec{e}_j }{\boldsymbol{\mathcal{C}}_k \vec{e}_m }\right) \nonumber \\
    & \quad \times (\mathcal{B}\vec{U} \otimes I_{n}) \nonumber
\end{align}
\end{subequations}
which has a closed form via \eqref{eq:ck_ej} and \eqref{eq:cov_ck_eij}. Hence, $\std{\vec{G}_{k}\boldsymbol{x}(k)}$ has the closed form \eqref{eq:std_gx} which can be formatted as a 2-norm.  

\setcounter{equation}{25}

By inserting \eqref{eq:exp_at_k}-\eqref{eq:var_at_k} into \eqref{eq:target_reform}, we see that $\lambda_k$ and the inclusion of $\vec{U}$ in $ \std{\vec{G}_{k} \bvec{x}(k)}$ form a biconvex constraint \cite{Gorski2007}. A biconvex problem has the following form:
\begin{subequations} \label{eq:biconvex}
\begin{align}
    \min_{x,y} & \; f(x,y) \\
    \mathrm{s.t.} & \; g_i(x,y) \leq 0 \quad \forall i \in \N 
\end{align}
\end{subequations}
where $x \in X \subseteq \R^n$, $y \in Y \subseteq \R^m$, $f(\cdot, \cdot): \R^n \times \R^m \rightarrow \R$ and $g_i(\cdot, \cdot): \R^n \times \R^m \rightarrow \R$ are convex when optimizing over one parameter while holding the other constant \cite{Gorski2007}. 

For completeness, we include Algorithm \ref{algo:acs} to demonstrate one method of solving \eqref{eq:biconvex} via the well known alternate convex search method \cite{Leeuw1994}. While this method cannot guarantee global optimally, Lemma \ref{lem:2} guarantees any solution will be a feasible solution to Problem \ref{prob:1}. We note that this method can be sensitive to chosen initial conditions. However, users may opt to utilize a grid search approach to find the initial conditions that produce the most optimal solution.  

\begin{algorithm}
    \DontPrintSemicolon
    \caption{Computing solutions to \eqref{eq:biconvex} with alternate convex search}
	\label{algo:acs}
	\textbf{Input}: Feasible initial condition for $\vec{y}$, denoted $\vec{y}^\ast$, maximum number of iterations $n_{max}$.
	\;
	\textbf{Output}: Solution to \eqref{eq:biconvex}, $(\vec{x}^\ast, \vec{y}^\ast)$ \;
	\For{$i = 1 $ to $n_{max}$}{
        Solve \eqref{eq:biconvex} assuming $\vec{y} = \vec{y}^\ast$; Set $\vec{x}^\ast = \vec{x}$ \;
        Solve \eqref{eq:biconvex} assuming $\vec{x} = \vec{x}^\ast$; Set $\vec{y}^\ast = \vec{y}$ \;
        \textbf{If} Solutions converged \textbf{break}
    }
\end{algorithm}

\section{Results} \label{sec:results}

We consider the following scenario: a two-bus electric grid with two thermal generation units, one stochastic wind power plant, and a stochastic load.
All computations were done on a 1.80GHz i7 processor with 16GB of RAM, using MATLAB, CVX \cite{cvx} and Mosek \cite{mosek}. All code is available at \url{https://github.com/unm-hscl/shawnpriore-time-varying-plant}.

\subsection{Power Generation}

\begin{figure}[b]
    \centering
    \includegraphics[width=0.7\columnwidth]{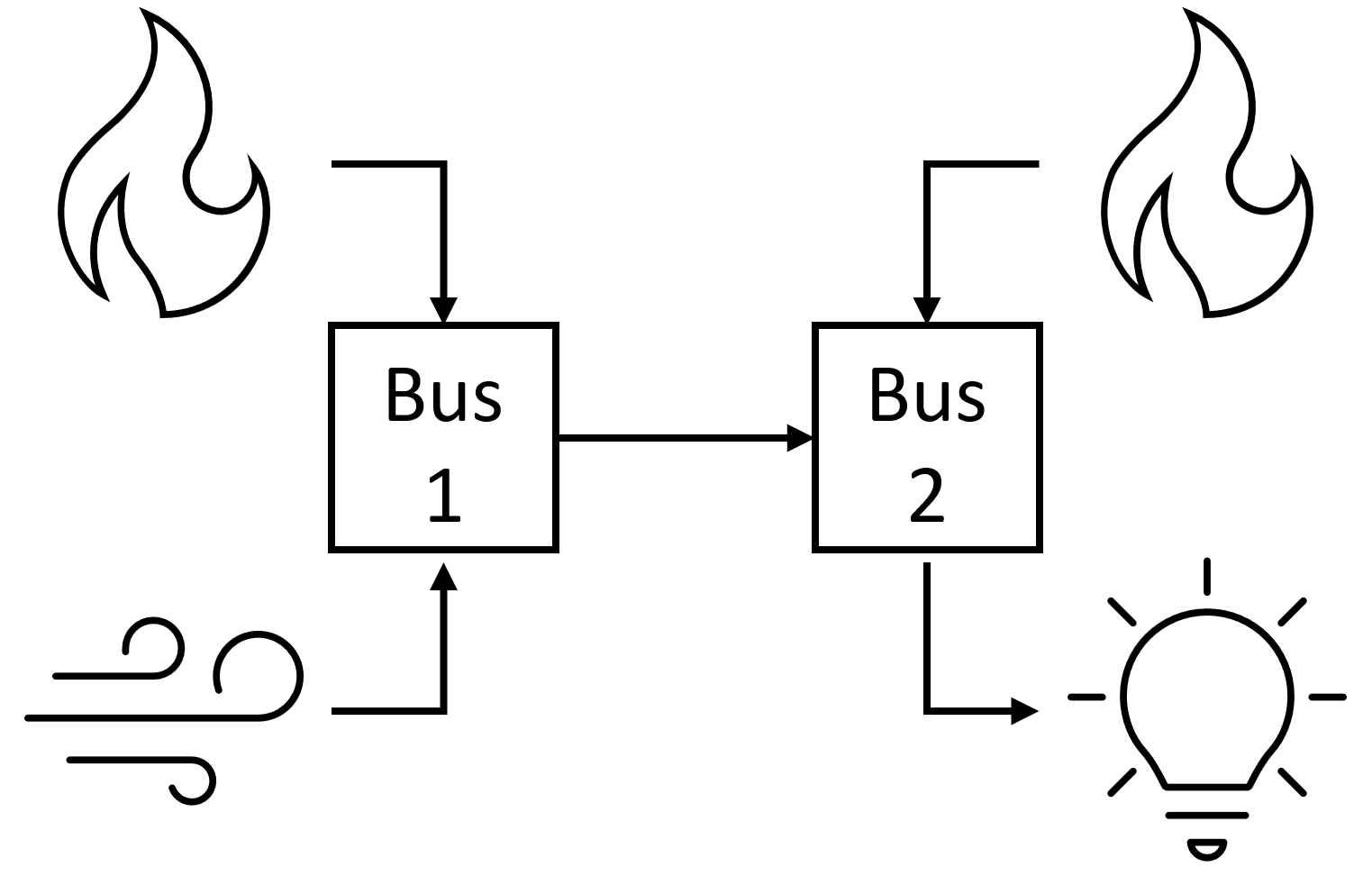}
    \caption{Two bus network with renewable source generation.}
    \label{fig:energy}
\end{figure}

Operation of power grids with high penetration of renewable energy sources are known to be a challenging task, largely because of the stochasticity inherent to intermittent sources of power generation (e.g., solar and wind generation units). Although thermal generation trends are predictable \cite{Ummels2007}, state-of-the-art modeling efforts are often unable to effectively capture the fundamentally erratic and stochastic nature of wind energy production \cite{Chen2010}.  Indeed, this is an active area of research in power grids, because of the challenges that these intermittent sources of power can create for the reliable and stable operation of power grids.

Figure \ref{fig:energy} shows a sketch of the prototypical system we consider, a small, two bus network with both wind and thermal generation, and a single load. 

We model the system with the LTV dynamics \cite{Summers2014}:
\begin{equation} \label{eq:energy}
\begin{split}
    \bvec{x}(k+1) = & \;
    \underbrace{\begin{bmatrix}
        0 & 0 & 0 & 0 & 0 & 0\\
        0 & 0 & 0 & 0 & 0 & 0\\
        0 & 0 & 1 & 0 & 0 & 0\\
        0 & 0 & \boldsymbol{\gamma} (k)^3 & 0 & 0 & 0 \\
        0 & 0 & 0 & 0 & 1 & 0 \\
        0 & 0 & 0 & 0 & \boldsymbol{\beta} (k) & 0
    \end{bmatrix}}_{\boldsymbol{A}(k)}
    \underbrace{\begin{bmatrix}
        P_1 \\ P_2 \\ C_{W} \\ P_{W} \\ C_{L} \\ L_k
    \end{bmatrix}}_{\bvec{x}(k)} \\
    & \; +
    \begin{bmatrix}
        I_2 \\ 0_{4 \times 2}
    \end{bmatrix}
    \underbrace{\begin{bmatrix}
        u_1(k) \\
        u_2(k)
    \end{bmatrix}}_{\vec{u}(k)}
\end{split}
\end{equation}
where $P_1, P_2$, and by extension $u_1(k)$ and $u_2(k)$ are the power generated (in MW) by the thermal generators connected to Bus 1 and 2, respectively, $C_{W}$ is a multiplier to convert cubed wind speed in m$^{3}\cdot$s${}^{-3}$ to MW, $P_W$ is the actual power generated from the wind farm, $C_{L}$ is the maximum load requirement in MW, and $L_k$ is the actual load. We presume $C_{L} = 1,600$ MW, and the wind farm houses 100 wind turbine generators with a blade length of 65 m. For standard air density of 1.225 kg$\cdot$m${}^{-3}$, $C_{wind} = 0.8130$ MW$\cdot$s${}^3 \cdot$m${}^{-3}$. All other initial conditions are set to 0 without loss of generality. Here, the random variables are  $\boldsymbol{\gamma} (k) \sim Weibull(5,30)$, and $\boldsymbol{\beta} (k) \sim Beta(50,50)$, and are presumed independent. Here, $\boldsymbol{\gamma} (k)$ represents the wind speed in m$\cdot$s${}^{-1}$ at time $k$ and is presumed to be consistent for all wind turbines.

The two thermal generators have a maximum nominal injection of 600 MW and must maintain at least 10\% of the maximum nominal injection to remain on, hence, 
\begin{equation}
    \mathcal{U} = \left\{ \vec{u}(k) \middle| \begin{bmatrix}
        -I_2 \\ I_2
    \end{bmatrix} \vec{u}(k) \leq \begin{bmatrix}
        60 \cdot 1_{2}\\
        600 \cdot 1_{2}
    \end{bmatrix}\right\}
\end{equation}
We presume that supply must meet demand and the power transmission line between buses has a maximum rating of 900 MW. Hence, the target set for each time step is defined by the inequality
\begin{equation}
     \underbrace{\begin{bmatrix}
        -1 & -1 & 0 & -1 & 0 & 1 \\
        1 & 0 & 0 & 1 & 0 & 0
    \end{bmatrix}}_{G_k} \bvec{x}(k) \leq \underbrace{\begin{bmatrix}
        0 \\
        900
    \end{bmatrix}}_{\vec{h}_k}
\end{equation}
Since the wind speed and load are stochastic, we consider the target constraint in a probabilistic manner and require they must hold with probability $1-\alpha$. The optimization cost is presumed to be
\begin{equation}
\begin{split}
    & J(\vec{u}(1), \dots, \vec{u}(N)) = \\
    & \; \sum_{k=1}^N \vec{u}^{\top}(k) \begin{bmatrix}
        0.05 & 0 \\ 0 & 0.10
    \end{bmatrix}\vec{u}(k) +  \begin{bmatrix}
        30 & 60
    \end{bmatrix}\vec{u}(k).
\end{split}
\end{equation}
is the cash expenditure for running the thermal generators. Here, the cost of running the thermal generator on Bus 2 is more costly than Bus 1.

From this construction, we observe that
\begin{subequations}
\begin{align}
    \prod_{i=k}^0\boldsymbol{A}(i) = & \; \boldsymbol{A}(k) \\
    \boldsymbol{A}(k)B = & \; 0_{n\times m} \quad  \forall k \in \N_{[0,N]}
\end{align}
\end{subequations}
Hence, the expectation and variance terms, \eqref{eq:exp_at_k}-\eqref{eq:var_at_k}, simplify to
\begin{subequations} \label{eq:expvar}
\begin{align}
    \ex{\vec{G}_{k}\bvec{x}(k+1)} = & \; \vec{G}_{k} \ex{\boldsymbol{A}(k)} \vec{x}(0)  + \vec{G}_{k}B(k)\vec{u}(k) \\
    \var{\vec{G}_{k}\boldsymbol{x}(k+1)} = 
    & \; \vec{G}_{k}\var{\boldsymbol{A}(k) \vec{x}(0)} \vec{G}_{k}^\top
\end{align}
\end{subequations}
The constraint \eqref{eq:target_reform} simplifies to a linear constraint for all time steps. Further, since neither term changes as a function of the time step, the optimal solution will have the same controller for all time steps. Thus, we only need to solve Reformulation \ref{prob:2} for one time step. From \eqref{eq:expvar}, we can easily find the expectation and variance of our constraints:
\begin{subequations}
\begin{align}
    \ex{\vec{G}_1  \bvec{x}(k\!+\!1)} = & \; 0.5 \cdot C_L - 118.9188 \cdot C_W \\
    & \ - \vec{u}_1(k) - \vec{u}_2(k) \nonumber\\
    \ex{\vec{G}_2 \bvec{x}(k\!+\!1)}
    = & \; 118.9188 \cdot C_W + \vec{u}_1(k)  
\end{align}
\end{subequations}
and
\begin{subequations}
\begin{align}
    \var{\vec{G}_1 \bvec{x}(k)}
    = &  \; 204.6946 \cdot C_{W}^2  + 0.0025 \cdot C_{L}^2 \\
    \var{\vec{G}_2 \bvec{x}(k)} = &  \; 204.6946 \cdot C_{W}^2 
\end{align}
\end{subequations}

 It is easy to show that the probability density function of $\boldsymbol{\gamma}(k)^3$ is log-concave via substitution. By \cite{Ibragimov1956}, $\boldsymbol{\gamma}(k)^3$ is strongly unimodal. Beta distributions with both parameters $\geq 1$ are also strongly unimodal. As strong unimodal distributions are closed under convolution, we guarantee the constraints are unimodal \cite{Bertin1997}.  

We compare the proposed methodology with scenario approach \cite{Campi2008}. As the scenario approach relies on samples of the random state matrix, it can only guarantee constraint satisfaction up to a set confidence level. For fair comparison between methods, we set the confidence level, $1-\beta$, to 0.999. We compute the number of samples, $N_S$, required to achieve this confidence level as \cite{Campi2008}
\begin{equation}
    N_S \geq \frac{2}{\alpha}\left(\log\left(\frac{1}{\beta} \right) +2 \right)
\end{equation}
corresponding to 112 samples for $1-\alpha=0.84$ and 1,781 samples for $1-\alpha=0.99$.

In Figures \ref{fig:power_cost} and \ref{fig:power_time}, we compare the optimal cost and solve time of our approach to the scenario approach. We consider discrete values of $1-\alpha \in [0.84, 0.99]$, and evaluate each approach for each value.  As shown in Figure \ref{fig:power_cost}, for lower safety probabilities, our approach has a lower cost, however, as the safety probability increases, the conservatism of our proposed approach is evident in the higher cost. (We note the proposed method was not able to find a solution at $1-\alpha =0.99$ as the admissible input set was too constraining to find a solution.) However, the solve time of our method is superior to the scenario approach for all safety probabilities for which a feasible solution was found.  In contrast, the solve time of the scenario approach appears to grow exponentially as the safety probability increases.  In considering between the two methods at high safety thresholds, the tradeoff between cost and solve time may inform choice of method. 

\begin{figure}
    \centering
    \includegraphics[width=0.8\columnwidth]{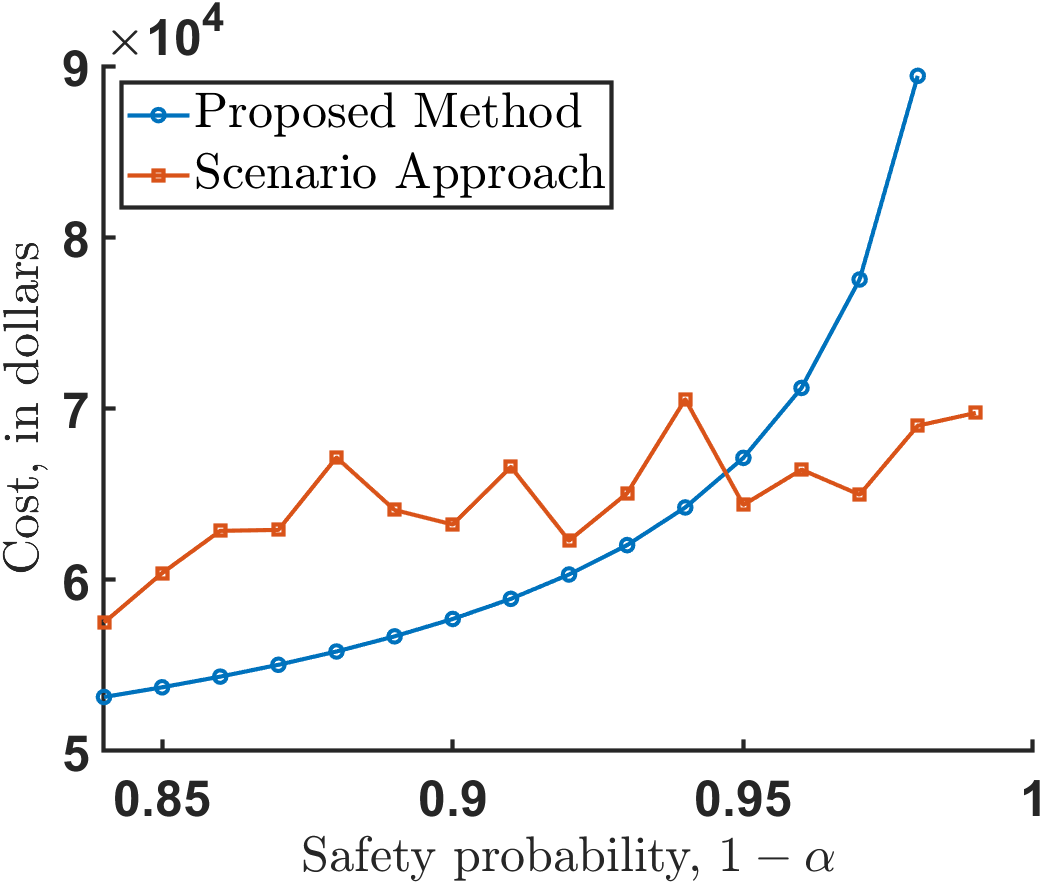}
    \caption{Comparison of optimal cost, $J$, between proposed method and scenario approach. The optimization problem was only infeasible with the proposed method at $1-\alpha =0.99$, and this is denoted by a missing value at this safety probability. The proposed method has a lower cost than the scenario approach for lower safety probabilities, and a higher cost for higher safety probabilities.}
    \label{fig:power_cost}
\end{figure}
\begin{figure}
    \centering
    \includegraphics[width=0.8\columnwidth]{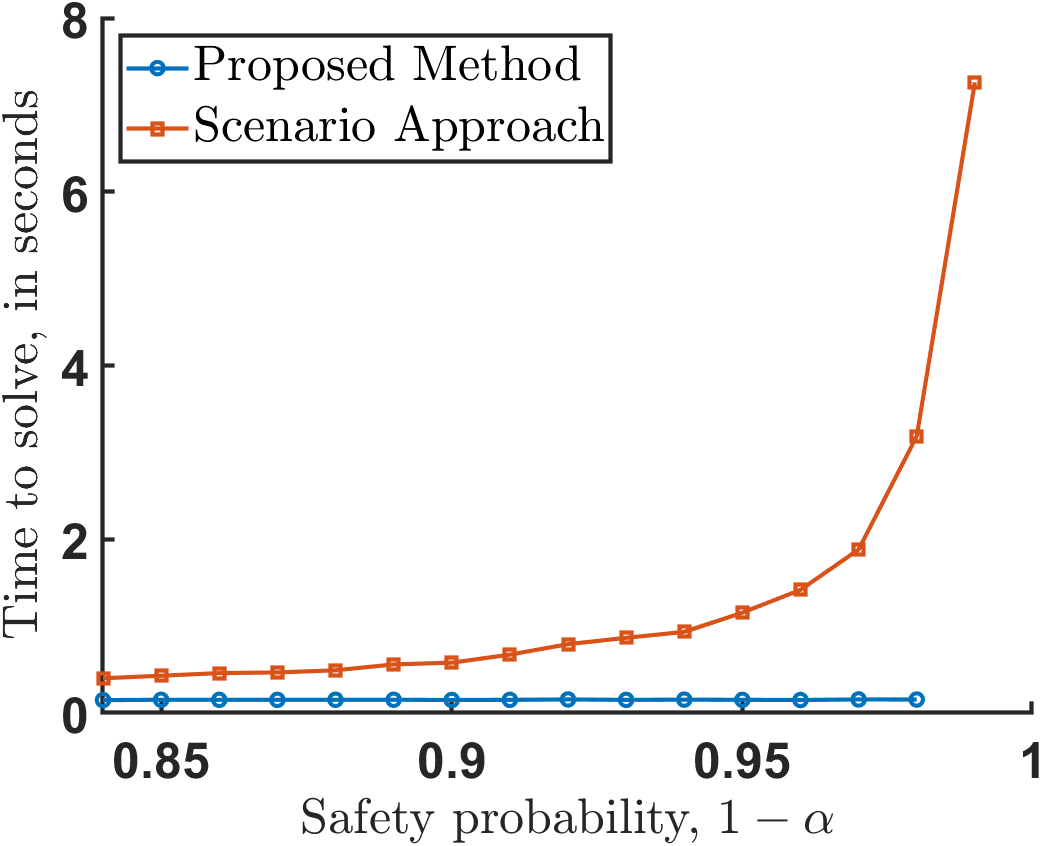}
    \caption{Comparison of time needed to find the optimal controller between proposed method and scenario approach. The optimization problem was only infeasible with the proposed method at $1-\alpha =0.99$, and this is denoted a missing value at this safety probability. The proposed method has a near constant time to solve where as the scenario approach appears to grow exponentially.}
    \label{fig:power_time}
\end{figure}

Lastly, we note that potential extension of this approach to more complex grid architectures could exploit the 
fact that the Problem 2 can be solved via a sequence of linear programs, meaning that efficient scaling would be possible.  Additionally, modeling choices in which hard constraints are cast as probabilistic constraints with high safety likelihoods may incur feasibility issues due to the conservatism inherent to the Vysochanskij–Petunin inequality.

\section{Conclusion} \label{sec:conclusion}

We proposed a framework for solving stochastic optimal control problems for systems with random plant parameters subject to polytopic target set chance constraints. Our approach relies on the one-sided Vysochanskij–Petunin inequality to reformulate the joint chance constraints into a series of individual chance constraints. We have shown that these new constraints typically result in a biconvex optimization problem and outlined the alternate convex search approach to solve them. We demonstrated our method for an stochastic multi-input power generation model and compared our results with the scenario approach. We showed that our method performed two orders of magnitude faster and for some safety thresholds resulted in a lower optimal cost. 

\bibliographystyle{ieeetr}
\bibliography{main}
\end{document}